\documentclass[11pt]{article}
 \usepackage{amsmath}
 \usepackage{amsfonts}
 \usepackage{amssymb}
 \usepackage{amsthm}
 
\usepackage{a4wide}

\theoremstyle{plain}% default
\newtheorem{theorem}{Theorem}
\newtheorem{lemma}{Lemma}

\theoremstyle{definition}

\theoremstyle{remark}

\usepackage{mathrsfs}
\usepackage{multicol}
\usepackage{authblk}

\newcommand{\LD}{\backslash}
\newcommand{\Neg}{\mathord{\sim}}
\newcommand{\LDI}{\LD^{\! *}}
\newcommand{\To}{\mathbin{\Rightarrow}}
\let\phi\varphi

\begin{document}
\title{First Degree Entailment with Group Attitudes and Information Updates\thanks{This is a preprint of an article to appear in the proceedings of the 7th International Conference on Logic, Rationality and Interaction (LORI-VII), Chongquing 2019, to be published by Springer. This work was supported by the Czech Science Foundation grant GJ18-19162Y for the project \textit{Non-classical logical models of information dynamics}. The authors are grateful to three anonymous referees for their feedback.}}
\author[1]{Igor Sedl\'ar}
\author[1,2]{V\'it Pun\v{c}och\'a\v{r}}
\author[1]{Andrew Tedder}
\affil[1]{The Czech Academy of Sciences, Institute of Computer Science\\ Pod Vod\'arenskou v\v{e}\v{z}\'i 271/2, Prague, The Czech Republic}
\affil[2]{The Czech Academy of Sciences, Institute of Philosophy\\
Jilsk\'a 352/1, Prague, The Czech Republic}

%\email{\{sedlar,puncochar,tedder\}@cs.cas.cz}}

\maketitle              
\begin{abstract}
We extend the epistemic logic with De Morgan negation by Fagin et al. (Artif. Intell. 79, 203--240, 1995) by adding operators for universal and common knowledge in a group of agents, and with a formalization of information update using a generalized version of the left division connective of the non-associative Lambek calculus. We provide sound and complete axiomatizations of the basic logic with the group operators and the basic logic with group operators and updates. Both logics are shown to be decidable. 

\end{abstract}

\section{Introduction}
Belnap's epistemic interpretation of First Degree Entailment \cite{Belnap1977a,Belnap1977b} shows that FDE is useful for reasoning about incomplete and potentially inconsistent information. FDE is not, however, an epistemic logic in the standard sense since its language does not contain operators expressing epistemic attitudes of agents. Such an extension of FDE was provided by Levesque \cite{Levesque1984} and brought closer to classical epistemic logic by Fagin et al.\ \cite{Fagin1995a}. These frameworks were originally put forward as an attempt to avoid the logical omniscience problem of classical epistemic logic, and so, to keep unnecessary complications out of the picture, they do not contain any additional operators utilized in the successful applications of classical epistemic logic, such as group epistemic operators \cite{Fagin1995} or operators expressing  various kinds of information update \cite{vanDitmarsch2008,Benthem2011}. 

In this paper we extend the framework of Fagin et al. \cite{Fagin1995a} with operators expressing universal and common knowledge in a group of agents (Section \ref{sec: groups}) and with a conditional operator, coming from the Non-associative Lambek Calculus, expressing information update (Section \ref{sec: updates}). These two basic logics are axiomatized and shown to be decidable; extensions are briefly mentioned, but are mostly left for future work (which is discussed in Section \ref{sec: conclusion}.)

\textit{Related work.} Non-classical modal logic with epistemic and information-dynamic operators is underdeveloped. Girard and Tanaka \cite{Girard2016} study a paraconsistent logic containing explicit revision operators. (This paper follows up on \cite{Restall1995} and \cite{Mares2002}, but these do not discuss Hintikka-style epistemic logics with revision operators; rather, they consider paraconsistent versions of AGM-style belief revision.) Rivieccio \cite{Rivieccio2014} studies an FDE-based version of Public Announcement Logic. Both of these papers contain only single-agent epistemic operators. An FDE-based group epistemic logic with universal and common knowledge is a fragment of paraconistent Propositional Dynamic Logic studied in \cite{Sedlar2016,Sedlar2019a}. B\'ilkov\'a et al.\ \cite{Bilkova2016} outline an extension of their substructural epistemic framework with common knowledge, but completeness is left for future research. The relation between substructural logic and classical information dynamics is studied in \cite{Benthem1991,Benthem2008} and \cite{Aucher2016}, for example; \cite{Dunn2001a,Restall1995a} discuss an information-dynamic interpretation of the Routley--Meyer semantics for some substructural logics. Restall \cite{Restall1995a} considers a ternary relation between sets of situations, but the framework considered in Section \ref{sec: updates} is original to this paper.

\section{FDE with group epistemic operators}\label{sec: groups}
In this section, we add to the framework of FDE with material implication, based on \cite{Fagin1995a}, modal operators representing universal knowledge in groups of agents (``everyone knows that ...'') and common knowledge. Firstly, we provide the basic definitions (Subsection \ref{sec: groups-definitions}), then we discuss the informal interpretation of the framework (\ref{sec: groups-informal}) and our  technical results, namely, a weakly complete axiomatization and a decidability result for the basic logic of the framework (\ref{sec: groups-results}). The proof is given in the technical appendix. 

\subsection{Group language and group frames}\label{sec: groups-definitions}
Fix a finite non-empty set $Ag$ (``agents'') and a countable set $Prop$ of propositional variables. The language $\mathcal{L}_{Gr}$ of FDE with material implication and group modalities contains
\begin{itemize}
\item unary connective $\Neg$ (De Morgan negation)
\item binary connectives $\land, \lor$ and $\supset$ (lattice conjunction and disjunction, material implication);
\item unary operators $K_G$, $K^{*}_G$ for each non-empty $G \subseteq Ag$ (group epistemic modalities)
\end{itemize}
Fix any $p \in Prop$ and define $\top := p \supset p$, $\bot := \Neg\top$ and $\neg\phi := \phi \supset \bot$. Formulas $\Neg\phi$ are read ``$\phi$ is false'' and $\neg\phi$ as ``$\phi$ is not true''; in our setting, these will not be equivalent. Sets $G \subseteq Ag$ represent groups of agents; $K_G\phi$ is read ``Every agent in $G$ knows that $\phi$'' and $K^{*}_G\phi$ as ``It is common knowledge in $G$ that $\phi$''. We define $K_a\phi := K_{\{ a \}}\phi$ and read this as ``Agent $a$ knows that $\phi$''.

\emph{Group frames} are $\langle S, \{ R_a \}_{a \in Ag}, \star \rangle$ where each $R_a$ is a reflexive binary relation on $S$ and $\star$ is a unary function of period two (that is, $\star(\star(x)) = x$ for all $x \in S$). We usually write $x^\star$ instead of $\star(x)$. Moreover, we define
\begin{align*}
R_G & := \bigcup_{a \in G} R_a &
R_G^{*} := \big( R_G \big)^{*}
\end{align*} 
Group models add to group frames a valuation function $v : Prop \to \mathscr{P}(S)$. For each model with $v$, we define the satisfaction relation $\vDash_v$ as usual when it comes to propositional variables and Boolean connectives; moreover, we require that
\begin{align*}
x &\vDash_v \Neg\phi \text{ iff } x^{\star} \not\vDash_{v} \phi\\
x & \vDash_v K_G \phi \text{ iff } \forall y (R_G xy \implies y \vDash_v \phi)\\
x & \vDash_v K^{*}_G \phi \text{ iff } \forall y (R^{*}_G xy \implies y \vDash_v \phi)
\end{align*}
We sometimes use the notation $v(\phi) = \{ x \mid x \vDash_v \phi \}$. Formula $\phi$ is valid in a model with $S$ and $v$ iff $v(\phi) = S$; it is valid in a frame iff it is valid in all models based on the frame and it is valid in a class of frames iff it is valid in all frames in the class. This notion of validity will be used throughout the paper. For any language $\mathcal{L}$, the $\mathcal{L}$-theory of a class of frames is the set of all $\mathcal{L}$-formulas valid in the class of frames.

It is easily seen that $v(\top) = S$ and so $v(\neg \phi) = S \setminus v(\phi)$. Hence, even though Boolean negation is not a primitive connective of our language, it can be expressed using material implication and De Morgan negation.

\subsection{Informal interpretation}\label{sec: groups-informal}
In group frames, elements of $S$ are called \textit{situations} and can be seen as situations in the sense of Barwise and Perry \cite{Barwise1983}, either concrete ones (parts of the world) of abstract ones (representations of parts of the world, either accurate or inaccurate). Mares \cite{Mares2004} discusses situations in the presence of De Morgan negation and we follow his interpretation, according to which situations may be incomplete (some $\phi$ is neither true nor false, i.e.\ neither $\phi$ nor $\Neg\phi$ is satisfied in the situation) and inconsistent (some $\phi$ is both true and false); we note that Barwise and Perry also allow ``incoherent'' situations \cite[96]{Barwise1983}. Levesque \cite{Levesque1984} uses the concept of a situation in a similar way; we note that this interpretation of the elements of $S$ is consistent with Belnap's interpretation in terms of ``simple databases'' \cite{Belnap1977a,Belnap1977b}. Existence of incomplete and inconsistent situations follows from our truth condition for $\Neg\phi$ in terms of ``the Routley star'' $\star$, which is thought of as an operation assigning to each situation its \textit{dual}; intuitively, the dual situation of $x$ makes true everything that is not made false in $x$ and vice versa. In general, we read $x \vDash_v \phi$ as ``$\phi$ is true in situation $x$ (on $v$)'', or ``The information that $\phi$ is supported by $x$ (on $v$)''.

The informal interpretation of ``epistemic accessibility relations'' $R_a$ differs only slightly from the standard reading of Kripke models for classical epistemic logic. Our basic idea is that, for each situation $x$ and each agent $a$, there is a \emph{part of $x$ that is available to $a$} in the sense that $a$ knows that it is a part of $x$. For instance, of the situation comprising the building in which my department is situated, only the part comprising my office is available to me at the moment, but upon receiving information from a colleague about something happening on a different floor, a bigger part of the situation becomes available to me. The fact that $R_axy$ is taken to mean, informally, that the part of $x$ available to $a$ is included in $y$. Hence, our truth condition for $K_a \phi$ means that $K_a \phi$ is supported (true) in $x$ iff each situation that contains the part of $x$ available to $a$ supports $\phi$---we may say that $K_a \phi$ is supported in $x$ iff the information available to $a$ in $x$ supports $\phi$. 

A note of caution is in order here, however. The elements of our models correspond to \emph{prime} situations in the sense that $x$ supports a disjunction iff it supports one of its disjuncts. ``Parts'' of situations, as we use the term, may not be prime in this sense. For instance, each prime situation containing the fact that Ann has one sibling contains the fact that Ann has one brother or the fact that Ann has one sister, but only the information that Ann has one sibling may be available to me, without me knowing if the sibling is male or female. A disjunction may be supported by a part of a situation without either disjunct being supported by \textit{that} part. ``Parts'' of situations in this sense are not necessarily elements of the model, but they may be represented by \emph{sets} of elements of the model; intuitively, the set representing a particular ``partial'' situation comprises all prime situations in the model that contain all the information in the partial situation. For instance, the partial situation supporting only the information that Ann has one sibling can be represented by the set comprising two prime situations differing in the gender of the sibling. See \cite{Belnap1977a,Belnap1977b} for details. Hence, we may speak of $R_a(x) := \{ y \mid R_axy \}$ as representing the part of $x$ available to $a$---it follows from reflexivity of $R_a$ that each $\phi$ supported by all situations in $R_a(x)$ is supported by $x$.

Let us turn now to the relations used in the satisfaction clauses for group operators. The fact that $R_G xy$ means that $y$ contains the part of $x$ available to some $a \in G$. Hence, $K_G \phi$ is supported in $x$ iff \textit{all} agents in $G$ have information that supports $\phi$. The fact that $R_G^{*} xy$ means that $(x,y)$ is in the reflexive transitive closure of $R_G$. (In fact, speaking of transitive closure is sufficient as all the relations are reflexive; we speak of reflexive transitive closure out of custom). In other words, there is a finite path $z_0 = x, z_1, \ldots, z_{n-1}, z_n = y$ such that, for all $k \in \{ 0, \ldots, n-1 \}$, $(z_k, z_{k+1}) \in R_a$ for some $a \in G$. Note that $(x,z) \in R_a$ and $(z,y) \in R_b$ means that $z$ contains the $a$-part of $x$ and $y$ contains the $b$-part of $y$. This means that $K_aK_b \phi$ is supported in $x$ iff the $a$-part of $x$ ``says'' that the $b$-part of $x$ supports $\phi$. In other words, $a$ knows that $b$ knows that $\phi$. Hence, $K^{*}_G \phi$ is supported in $x$ iff, in a standard manner, each agent in $G$ knows that all the agents know that ... all the agents know that $\phi$.

Belnap \cite{Belnap1977a,Belnap1977b} motivated FDE as a logic useful for reasoning about simple databases containing potentially inconsistent information; this reasoning involved only information formulated using $\Neg, \land$ and $\lor$. The epistemic extension of FDE by Fagin et al.\ \cite{Fagin1995a} can be seen as a logic for reasoning about potentially inconsistent databases where the relevant information may involve $K_a$, that is, where \textit{information about information} available to individual agents is involved. Here inconsistency may be encountered at least on two levels. Firstly, a database may contain inconsistent information about the information of agent $a$, that is, it may contain $K_a \phi$ and $\Neg K_a \phi$ for some $\phi$. In contrast to epistemic logic based on classical logic, the framework of \cite{Fagin1995a} allows to reason with such databases without ``explosion'', i.e.\ without inferring any $\psi$ whatsoever. Secondly, a database may contain information that the information of agent $a$ is inconsistent, that is, it may contain $K_a \phi$ and $K_a \Neg \phi$ for some $\phi$. In contrast to classical epistemic logic, the framework of \cite{Fagin1995a} does not force the conclusion that, in this case, $K_a \psi$ holds for any $\psi$ whatsoever. The upshot of our group FDE is that these features are lifted to group epistemic notions---we have here a logic useful for reasoning about potentially inconsistent information, including information about information available to groups of agents that may turn out to be inconsistent on the two levels mentioned above in connection to individual knowledge operators.

\subsection{Completeness and decidability}\label{sec: groups-results}

The axiom system $GrFDE$ contains the following axiom schemata and rules ($X \in \{ K, K^{*} \}$): 
\begin{multicols}{2}
\begin{itemize}\raggedcolumns
\item[] (A0) Any fixed axiomatization of the $\{ \land, \lor, \supset\}$-fragment of classical propositional logic
\item[] (A1) $\phi \supset \Neg\Neg \phi$
\item[] (A2) $\Neg\Neg\phi \supset \phi$
\item[] (A3) $(\Neg \phi \land \Neg \psi) \supset \Neg (\phi \lor \psi)$
\item[] (A4) $\Neg (\phi \land \psi) \supset (\Neg \phi \lor \Neg \psi)$
\item[] (A5) $X_G \phi \land X_G \psi \supset X_G (\phi \land \psi)$
\item[] (A6) $X_G \phi \supset \phi$
\item[] (A7) $K_G \phi \supset\!\subset \bigwedge_{a \in G} K_a \phi$
\item[] (A8) $K^{*}_G \phi \supset K_G (\phi \land K^{*}_G \phi)$
\item[] (R0) Modus Ponens
\item[] (R1) $\dfrac{\phi \supset \psi}{\Neg \psi \supset \Neg \phi}$
\item[] (R2) $\dfrac{\phi}{X_G \phi}$
\item[] (R3) $\dfrac{\phi \supset K_G (\psi \land \phi)}{\phi \supset K^{*}_G \psi}$
\end{itemize}
\end{multicols}

\begin{theorem}\label{thm: GrFDE complete decidable}
$GrFDE$ is a sound and weakly complete axiomatization of the $\mathcal{L}_{Gr}$-theory of all group frames. The theory is decidable.
\end{theorem}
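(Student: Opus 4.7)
\textbf{Soundness} is routine: each axiom is valid on every group frame and each rule preserves validity. A1--A4 and R1 follow from $\star$ being of period two together with the clause $x \vDash_v \Neg\chi \iff x^\star \not\vDash_v \chi$; A5--A7 and R2 are standard modal-logic facts about $R_G = \bigcup_{a \in G} R_a$, which is reflexive because each $R_a$ is; A8 and R3 encode the characterization of $R^*_G$ as the reflexive transitive closure of $R_G$, in the style of Segerberg's fixpoint axiom and Kozen--Parikh induction rule.

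\textbf{Completeness and decidability} I plan to prove simultaneously via the finite model property. Given $\phi_0$ with $\not\vdash_{GrFDE} \phi_0$, fix a Fischer--Ladner-style finite closure $\Sigma$ containing $\phi_0$, closed under subformulas and De Morgan duals (so $\Neg\chi \in \Sigma$ whenever $\chi \in \Sigma$, modulo A1--A2), and satisfying the Kozen-style closure condition that $K_G(\chi \land K^*_G\chi) \in \Sigma$ whenever $K^*_G\chi \in \Sigma$. States of the canonical model will be $\Sigma$-atoms, i.e.\ maximal $GrFDE$-consistent subsets of $\Sigma$ in the Fagin et al.\ style. Set $x^\star$ to be the atom dual to $x$, characterized by $\chi \in x^\star \iff \Neg\chi \notin x$ (well-defined by A1--A2 and R1); define $R_a xy$ iff $\{\chi : K_a\chi \in x\} \cap \Sigma \subseteq y$, with $R_G$ and $R^*_G$ derived as in the semantics; and let $v(p) = \{x : p \in x\}$.

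The crux is then the truth lemma $x \vDash_v \chi \iff \chi \in x$ for $\chi \in \Sigma$. The atomic, Boolean, and $\Neg$ cases follow the Fagin et al.\ template for FDE with material implication. The $K_G$ case uses a standard existence argument via A5--A7 to lift $K_G\chi \notin x$ to an $R_G$-successor omitting $\chi$. The right-to-left direction for $K^*_G$ falls out of A8 by induction on $R^*_G$-path length. The step I expect to be the main obstacle is the left-to-right direction for $K^*_G$, where I would deploy the Kozen--Parikh trick: assuming $K^*_G\chi \notin x$ for contradiction, let $T$ be the set of $\Sigma$-atoms from which $\chi$ holds at every $R^*_G$-successor, form the ``characteristic disjunction'' $\psi = \bigvee_{y \in T} \delta_y$ of the (finitely many) $\delta_y = \bigwedge y$, verify syntactically that $\vdash \psi \supset K_G(\chi \land \psi)$ using A5--A7 and the definition of $R_a$ on atoms, and then apply R3 to obtain $\vdash \psi \supset K^*_G\chi$, contradicting the assumption since $\psi$ would otherwise belong to $x$.

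Decidability is then immediate: the number of $\Sigma$-atoms is at most $2^{|\Sigma|}$, so finitely bounded countermodels suffice; to decide $\vdash_{GrFDE} \phi_0$ one enumerates proofs in parallel with candidate countermodels of bounded size. The real work, and the likely source of subtle bookkeeping, lies in engineering $\Sigma$ and the characteristic-disjunction argument so that all the provability facts required to fire R3 are available within the closure---this is where the interaction between the syntactic induction rule and the semantic transitive-closure condition must be pinned down carefully.
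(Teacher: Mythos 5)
Your proposal is correct, and the paper explicitly endorses your route: immediately after Theorem \ref{thm: GrFDE complete decidable} the authors remark that, since Boolean negation $\neg\phi := \phi \supset \bot$ is expressible, the result ``can be established using the standard technique'' of Fagin et al.\ --- which is precisely what you do, with a Fischer--Ladner-style finite closure, maximal consistent subsets of the closure as states, and the Kozen--Parikh characteristic-disjunction argument feeding the induction rule (R3) in the $K^{*}_G$ case. The appendix proof differs in one deliberate respect: instead of maximal consistent sets it takes states to be independent pairs $\langle x_{in}, x_{out}\rangle$ with $x_{in} \cup x_{out} = \Phi'$, produced by the Pair Extension Theorem for prime theories, and it handles De Morgan negation via the map $\tilde{\psi}$ rather than by closing under $\Neg$ ``modulo A1--A2''. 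What that buys is modularity: the argument never appeals to Boolean negation, so it transfers to the weaker-negation variants sketched in the conclusion, where $\neg$ is no longer definable and maximal consistent sets are unavailable; your version is shorter but tied to the definability of $\neg$. Otherwise the two proofs match: your characteristic-disjunction step is the paper's Lemmas \ref{lemma: common-2} and \ref{lemma: common-1} in all but notation (your $T$ is defined globally, while the paper's $Z$ is the $R^{*}_G$-successor set of the fixed $x$; both work because $R^{*}_G$ is reflexive and transitive), and the decidability arguments coincide. The one bookkeeping point to pin down, which the paper's $\tilde{\psi}$ device is designed for, is that closing $\Sigma$ under $\Neg$ literally would make it infinite; identifying $\Neg\Neg\chi$ with $\chi$, as you indicate, is the standard fix.
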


\noindent Since Boolean negation is expressible in our language, Theorem \ref{thm: GrFDE complete decidable} can be established using the standard technique (\cite[Ch.\ 3.1]{Fagin1995}). In the technical appendix, we give an alternative ``modular'' proof, based on \cite{Nishimura1982}, that does not invoke Boolean negation and, as such, can be used in a setting where Boolean negation is not expressible (e.g.\ when specific weaker negations are used instead of De Morgan negation; see Sect.\ \ref{sec: conclusion}). 

\section{Almost arbitrary information updates}\label{sec: updates}
In this section, we extend our framework with a formalization of information update. Instead of focusing on one specific notion of update, such as public announcements, belief revision or the various notions of belief upgrade, we provide a somewhat more general account. Taking inspiration from van Benthem \cite{Benthem2014}, we add to our semantics an \emph{abstract representation of updates} and we study the general framework arising from this addition. (See also \cite{Holliday2012} for a nicely generalizable framework, based on abstract update relations, for the fragment of Public Announcement Logic closed under substitution; both frameworks bear some similarity to the general semantics for conditional logics \cite{Chellas1975}.) An interesting endeavour is to relate the abstract semantics to known notions of update via special cases of the general framework, but we leave such investigations for future work.

Similarly to the framework of \cite{Benthem2014}, information updates are represented as binary relations between elements of the model indexed by subsets of the model. Instead of pointed models in van Benthem's ``update universe'', elements of our models are prime situations. This feature of the model derives from the goal of formulating a general representation of information update on an inconsistency-tolerant background. The indexing set of situations, ``the proposition triggering the update'' \cite[32]{Benthem2014}, corresponds to the information content of the update. We do not assume the content of an update to correspond to a prime situation; typically the ``incoming'' information corresponds to a part of a prime situation. (Recall that parts of prime situations are represented in our framework by sets of prime situations.)

Hence, an \emph{update relation} on a set of situations $S$ is a function from the power set of $S$ (all possible ``triggering propositions'') to binary relations on $S$ (``situation transitions''). Equivalently, we may represent an update relation by $R \subseteq (S \times \mathscr{P}(S) \times S)$ ($RxYz$ iff $(x,y)$ is in the transition determined by the triggering proposition $Y$). In what follows, \emph{group update frames} are $\langle S, \{ R_a \}_{a \in Ag}, R, \star\rangle$ where $R$ is such an update relation. 

In modal logics of information update we typically have formulas specifying the results of information update depending on the nature of the ``triggering proposition''; in general, the interesting feature is whether updates of a certain kind are guaranteed to lead to outputs satisfying specific formulas. Here we will distinguish updates with based on information supported by the ``triggering proposition''.  

The language $\mathcal{L}_{GrUp}$ extends $\mathcal{L}_{Gr}$ with a binary connective $\LD$; formulas $\phi\LD\psi$ are read ``After updating with any  information supporting $\phi$, $\psi$ will hold''. \emph{Group update models} add a valuation function $v$ to group update frames and the satisfaction relation $\vDash_v$ is defined as usual; for $X \subseteq S$, $X \vDash_v \phi$ means that $x \vDash_v \phi$ for all $x \in X$. The new clause in the definition of $\vDash_v$ is the following:
\[
x \vDash_v \phi \LD \psi \text{ iff } (\forall Y)(\forall z)\big( (RxYz \And Y \vDash_v \phi ) \To z \vDash_v \psi \big) 
\] 
Validity is defined as before. Note that $\LD$ is a generalized version of the left division operator of the Non-Associative Lambek Calculus \cite{Dosen1992,Kurtonina1994,Restall2000}. There the truth condition uses individual situations $y$, not sets of situations.

We read $RxYz$ as ``Updating $x$ with the partial situation $Y$ may result in $z$''. Hence, $\phi\LD\psi$ is true in $x$ iff $\psi$ holds in every possible result of updating $x$ by a partial situation that supports $\phi$.

The proof system $GrUpFDE$ extends $GrFDE$ with
\begin{itemize}
\item[] (A9) $(\chi \LD \phi \land \chi \LD \psi) \supset \chi \LD (\phi \land \psi)$
\item[] (R4) $\dfrac{\phi_1 \supset \psi_1 \quad \phi_2 \supset \psi_2}{\psi_1 \LD \phi_2 \supset \phi_1 \LD \psi_2}$
\item[] (R5) $\dfrac{\phi}{\psi \LD \phi}$
\end{itemize}

\begin{theorem}\label{thm: GrUpFDE complete decidable}
$GrUpFDE$ is a sound and weakly complete axiomatization of the $\mathcal{L}_{GrUp}$-theory of all group update frames. The theory is decidable.  
\end{theorem}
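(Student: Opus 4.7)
Soundness is a routine check: (A9) is immediate from the universal form of the truth condition for $\LD$; (R4) holds because strengthening $\phi_1 \supset \psi_1$ restricts the triggering sets $Y$ that count for $\psi_1 \LD \phi_2$, while weakening $\phi_2 \supset \psi_2$ softens the output demanded at $z$; (R5) holds because a valid $\phi$ is true at every $z$, so $\psi \LD \phi$ is vacuously true at every $x$.

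For completeness and decidability, the plan is to extend the finite canonical pseudo-model construction used for Theorem~\ref{thm: GrFDE complete decidable} in the appendix. Given a non-theorem $\chi_0$, close the subformulas of $\chi_0$ under De Morgan negation, the common-knowledge unfolding required for Theorem~\ref{thm: GrFDE complete decidable}, and the clause that $\phi \LD \psi \in \Sigma$ entails $\phi, \psi \in \Sigma$. The states are the prime $\Sigma$-theories produced by the appendix, equipped with the same $R_a^c$ and $\star$, and the canonical update relation is defined by
\[
R^c\, x\, Y\, z \;\iff\; (\forall \phi \LD \psi \in \Sigma)\,\bigl( \phi \LD \psi \in x \text{ and } (\forall y \in Y)(\phi \in y) \;\To\; \psi \in z \bigr).
\]

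The new inductive case of the truth lemma is for $\phi \LD \psi$. The direction $\phi \LD \psi \in x \Rightarrow x \vDash_v \phi \LD \psi$ is immediate from the definition of $R^c$ together with the induction hypothesis. For the converse, assume $\phi \LD \psi \notin x$ and set $Y := \{ y \mid \phi \in y \}$, so that $Y \vDash_v \phi$ by the IH. Define
\[
T := \{\, \chi \in \Sigma \mid \exists \phi'\,[\,(\forall y \in Y)(\phi' \in y) \text{ and } \phi' \LD \chi \in x\,] \,\}.
\]
Using (A9) together with (R4), the set $T$ is closed under $\Sigma$-conjunction; by (R5) it contains every $\Sigma$-theorem. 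The goal is then to extend $T$ to a prime $\Sigma$-theory $z$ with $\psi \notin z$, which automatically satisfies $R^c xYz$ and refutes $\psi$ at $z$ by the IH.

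The main obstacle is verifying that $\psi \notin T$, and this rests on the Lindenbaum-style sublemma inherent in the pseudo-state construction: \emph{if $\phi' \in y$ for every prime $\Sigma$-theory $y$ containing $\phi$, then $\vdash_{GrUpFDE} \phi \supset \phi'$}. Granted this, if $\psi \in T$ via some witness $\phi'$ lying in every member of $Y$, then $\vdash \phi \supset \phi'$; applying (R4) to $\phi \supset \phi'$ and the trivial $\psi \supset \psi$ yields $\vdash \phi' \LD \psi \supset \phi \LD \psi$, and since $\phi' \LD \psi \in x$ we would get $\phi \LD \psi \in x$, contradicting our assumption. The prime extension of $T$ to $z$ proceeds exactly as in the proof of Theorem~\ref{thm: GrFDE complete decidable}. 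Because $|\Sigma|$ grows polynomially in $|\chi_0|$ and the pseudo-model has at most $2^{|\Sigma|}$ states, this argument simultaneously yields weak completeness and an effective decision procedure.
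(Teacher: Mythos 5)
Your construction departs from the paper's in an interesting way: the paper defines the canonical update relation $RxYz$ by requiring the existence of witnessing \emph{infinite} prime theories $\Gamma, \Sigma, \{\Delta_i\}_{i\in I}$ lying above the finite states, and in the refutation step it builds the triggering set from a family of prime theories $\Delta_i$, each containing $\phi$ and omitting some $\alpha_i$ with $\alpha_i\LD\beta_i\in\Gamma$, $\beta_i\notin\Sigma$. You instead define the relation purely syntactically on the finite states and take the single triggering set $Y=\{y\mid \phi\in y_{in}\}$. Your key sublemma (if $\phi'$ lies in every state containing $\phi$, then $\vdash\phi\supset\phi'$) is sound --- it is just the contrapositive of pair extension restricted to the closure set, for $\phi,\phi'$ in that set --- and it is the finite-state counterpart of the paper's observation that $\phi\supset\alpha_i$ is unprovable for $\alpha_i\in\Lambda$. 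Both routes work in principle, and yours is arguably more economical.

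However, there is a genuine gap in your refutation step. To extend $T$ to a state $z$ with $\psi\notin z_{in}$ you need $\langle T,\{\psi\}\rangle$ to be an \emph{independent pair}, i.e.\ no finite conjunction of members of $T$ provably implies $\psi$; merely showing $\psi\notin T$ is not enough. Your appeal to ``$T$ is closed under conjunction within the closure set'' does not bridge this, because the closure set is not closed under $\land$ (and $T$ is not obviously closed under provable consequence within the closure either, since the relevant formulas $\phi'\LD\chi'$ may fall outside it). The correct claim is $T\not\vdash_{GrUpFDE}\psi$, and it is provable by the $n$-ary version of your own argument: if $\chi_1,\dots,\chi_n\in T$ with witnesses $\phi'_1,\dots,\phi'_n$ and $\vdash\bigwedge_i\chi_i\supset\psi$, then by your sublemma $\vdash\phi\supset\phi'_i$, so by (R4) $x_{in}\vdash\phi\LD\chi_i$ for each $i$; by (A9) $x_{in}\vdash\phi\LD\bigwedge_i\chi_i$; by (R4) again $x_{in}\vdash\phi\LD\psi$; and since $\phi\LD\psi$ lies in the closure and $x$ is an independent pair covering it, $\phi\LD\psi\in x_{in}$, a contradiction (the case $n=0$ uses (R5)). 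With that repair, and a word on the empty-$Y$ edge case, your argument goes through and also delivers the finite model property needed for decidability.
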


Using Boolean negation, we may define a ``diamond version'' of the update operator $\LD$ as $\phi \circ \psi := \neg (\phi \LD \neg \psi)$. It is clear that 
\[
x \vDash_v \phi \circ \psi \text{ iff } (\exists Y)(\exists z)\big( RxYz \And Y \vDash_v \phi \And z \vDash_v \psi \big)
\]
Note that the connective $\circ$ is not what is usually called \emph{fusion} in the literature on substructural logic; the update operator $\LD$ is not a residual of $\circ$. An axiomatization of the theory of all group update frames in languages where $\circ$ is present as a primitive operator and Boolean negation is not expressible is an open problem. (This is the case even for the language $\{ \land, \lor, \LD, \circ \}$ and the $\langle S, R\rangle$-reducts of group update frames.)

\section{Conclusion}\label{sec: conclusion}

In this paper we outlined two FDE-based epistemic logics, the basic logic with universal and common knowledge, and its extension with a generalized left division operator of the Non-associative Lambek Calculus, formalizing an abstract notion of information update. We established axiomatization and decidability results for these logics.

Among topics that we leave out of the present paper is a study of axiomatic extensions of $GrFDE$ and $GrUpFDE$. It is especially natural to consider extensions of $GrFDE$ by various introspection axioms, such as positive introspection $K_a \phi \supset K_aK_a\phi$, Boolean negative introspection $\neg K_a\phi \supset K_a \neg K_a \phi$ and De Morgan negative introspection $\Neg K_a \phi \supset K_a \Neg K_a \phi$. Regarding extensions of $GrUpFDE$, it is interesting to take a look at how our framework accommodates some typical properties of special cases of information update (e.g.\ monotonicity $\phi \LD \chi \supset \phi \LD (\psi \LD \chi)$ or ``success'' $\phi \LD \psi \supset \phi \LD (\phi \land \psi)$; the latter seems to require an extension of our frames with a partial order on the set of situations in the style of the Routley--Meyer semantics for substructural logics \cite{Restall2000}.)

Another topic for future research are specific language extensions of our logics. A particular instance is related to the \emph{iterated update} operator $\LDI$, where $\phi \LDI \psi$ is read as ``$\psi$ holds after any finite number of updates by $\phi$''. A natural semantics for this operator is obtained by defining 
\[
R^{1} xYz := RxYz
\qquad
R^{n+1} xYz := \exists Uv (RxUv \And R^{n} vYz)
\]
and
\[
R^{*} := \{ \langle x, Y, z\rangle \mid (\exists n \in \mathbb{N})(R^{n}xYz) \}
\]  
and requiring that 
\[
x \vDash_{v} \phi \LDI\psi \text{ iff } \forall Yz ((R^{*}xYz \And Y \vDash_{v} \phi) \implies z \vDash_{v} \psi) 
\]
We conjecture that a complete axiomatization of the theory of all group update frames with $R^{*}$ is obtained by adding to $GrUpFDE$ the following: 
\begin{multicols}{2}\raggedcolumns
\begin{itemize}
\item[] (A10) $(\chi \LDI \phi \land \chi \LDI \psi) \supset \chi \LDI (\phi \land \psi)$
\item[] (A11) $\phi \LDI \psi \supset (\phi \LD \psi \land \phi \LD (\phi \LDI \psi))$
\item[] (A12) $(\phi \LD (\phi \LDI \psi)) \supset (\phi \LDI \psi)$
\item[] (R6) $\dfrac{\phi_1 \supset \psi_1 \quad \phi_2 \supset \psi_2}{\psi_1 \LDI \phi_2 \supset \phi_1 \LDI \psi_2}$
\item[] (R7) $\dfrac{\phi \supset \psi \LD \phi}{\phi \supset \psi \LDI \phi}$
\end{itemize}
\end{multicols}
(On some assumptions concerning the update relation $R$, $\phi \LDI \psi$ can be expressed in a language containing fusion and the Kleene star operator; see \cite{Bimbo2005}. Our setting intends to be more general. Also, the presence of Boolean negation, some of these assumptions concerning $R$ lead to undecidability; see \cite{Kurucz1995}. It was shown in \cite{Miller2005} that the classical Public Announcement Logic with an operator for iterated announcements is undecidable. Hence, the question is, which notions of update admit a decidable logic with iterated updates? Our general setting is especially suitable for such investigations, but they need to be left for future research.)

Another interesting topic are generalizations of the framework using weaker notions of negation than De Morgan negation used here. In general, negation can be seen as a negative modal operator with the satisfaction clause
\[
x \vDash_v \Neg \phi \text{ iff } \forall y(R_{\Neg}xy \implies y \not\vDash_v \phi) 
\]
using an arbitrary binary relation $R_{\Neg}$. If this relation is not serial, then Boolean negation cannot be expressed and some of the standard techniques used in completeness proofs for logics with common knowledge (and other fixpoint) operators cannot be used.

\appendix
\section{Proofs}

Let $L$ be any set of formulas containing all substitution instances of propositional tautologies in $\{ \land, \lor, \supset \}$ that is closed under Modus Ponens and Uniform substitution. We say that a set of formulas $\Delta$ is \emph{$L$-derivable} from a set of formulas $\Gamma$, notation $\Gamma \vdash_{L} \Delta$, iff there is $\gamma = \bigwedge \Gamma' \subseteq \Gamma$ and $\delta = \bigvee \Delta' \subseteq \Delta$ such that $\gamma \supset \delta$ is in $L$. We note that $\bigwedge \emptyset := \top$, so if $\Delta$ contains an element of $L$, then $\Gamma \vdash_{L} \Delta$ for all $\Gamma$. We say that $\langle \Gamma, \Delta\rangle$ is an \emph{independent $L$-pair} iff $\Gamma \not\vdash_{L} \Delta$.

A \emph{prime $L$-theory} is any set of formulas $\Gamma$ that i) contains $L$, ii) is closed under $\phi \supset \psi \in L$ (that is, if $\phi \supset \psi \in L$ and $\phi \in \Gamma$, then $\psi \in \Gamma$) and iii) contains $\phi \lor \psi$ only if it contains $\phi$ or $\psi$. A prime $L$-theory is called \emph{non-trivial} iff it is not the set of all formulas.

\begin{theorem}[Pair Extension]
If $\langle\Gamma,\Delta\rangle$ is an independent $L$-pair, then there is a non-trivial prime $L$-theory $\Sigma$ extending $\Gamma$ that is also disjoint from $\Delta$.
\end{theorem}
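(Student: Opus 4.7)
The plan is a standard Lindenbaum-style pair extension. Since the language is countable, enumerate all formulas as $\phi_0, \phi_1, \ldots$ and build a chain of independent $L$-pairs $(\Gamma_n, \Delta_n)$ with $(\Gamma_0, \Delta_0) := (\Gamma, \Delta)$. At stage $n+1$, let $\Gamma_{n+1} := \Gamma_n \cup \{\phi_n\}$ and $\Delta_{n+1} := \Delta_n$ if this pair is still independent; otherwise let $\Gamma_{n+1} := \Gamma_n$ and $\Delta_{n+1} := \Delta_n \cup \{\phi_n\}$. Finally set $\Sigma := \bigcup_n \Gamma_n$ and $\bar{\Delta} := \bigcup_n \Delta_n$. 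By construction these partition the set of all formulas, so $\Sigma \cap \Delta \subseteq \Sigma \cap \bar{\Delta} = \emptyset$ once independence is preserved throughout.

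The only non-routine step is the \emph{splitting lemma}: at every stage at least one of the two options preserves independence. Suppose both $(\Gamma_n \cup \{\phi_n\}, \Delta_n)$ and $(\Gamma_n, \Delta_n \cup \{\phi_n\})$ are dependent, witnessed by finite $\Gamma', \Gamma'' \subseteq \Gamma_n$ and $\Delta', \Delta'' \subseteq \Delta_n$ such that $(\bigwedge \Gamma') \land \phi_n \supset \bigvee \Delta' \in L$ and $\bigwedge \Gamma'' \supset \phi_n \lor \bigvee \Delta'' \in L$. Pure propositional reasoning in the $\{\land,\lor,\supset\}$-fragment combines these into $\bigwedge(\Gamma' \cup \Gamma'') \supset \bigvee(\Delta' \cup \Delta'') \in L$, which contradicts the independence of $(\Gamma_n, \Delta_n)$.

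It remains to verify that $\Sigma$ is a non-trivial prime $L$-theory. Any would-be derivation of $\Sigma \vdash_L \bar{\Delta}$ uses finitely many formulas and thus already witnesses dependence of some $(\Gamma_n, \Delta_n)$, so $\Sigma \cap \bar{\Delta} = \emptyset$. Containment of $L$: if $\phi \in L$ were placed in $\bar{\Delta}$, then $\top \supset \phi \in L$ would break independence at that stage, so $\phi \in \Sigma$. Closure under $L$-implication: if $\phi \in \Sigma$, $\phi \supset \psi \in L$, and $\psi \in \bar{\Delta}$, then $\phi \supset \psi$ witnesses dependence at any stage containing both. Primeness: if $\phi \lor \psi \in \Sigma$ and both $\phi, \psi \in \bar{\Delta}$, the tautology $(\phi \lor \psi) \supset \phi \lor \psi$ witnesses dependence. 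Non-triviality reduces to $\bar{\Delta} \neq \emptyset$; this is immediate when $\Delta \neq \emptyset$, and otherwise one may harmlessly replace $(\Gamma, \emptyset)$ by $(\Gamma, \{\phi\})$ for any $\phi \notin L$ (such $\phi$ exists because $L$ itself is non-trivial, and $(\Gamma, \{\phi\})$ remains independent since $\Gamma \not\vdash_L \emptyset$ yields $\Gamma \not\vdash_L \{\phi\}$).

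The main obstacle is the splitting lemma: the argument has to be carried out inside the $\{\land,\lor,\supset\}$-fragment without appeal to De Morgan or Boolean negation, which is precisely the point of the modular approach mentioned after Theorem \ref{thm: GrFDE complete decidable}. Once the splitting lemma is in place, every other clause follows by the standard ``it would already be a dependence at a finite stage'' argument, and the construction is entirely uniform in $L$, so the same proof will apply to both $GrFDE$ and $GrUpFDE$.
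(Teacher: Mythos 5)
Your construction is the standard Lindenbaum-style pair extension, which is precisely what the paper relies on (its ``proof'' is a citation to Restall, \emph{An Introduction to Substructural Logics}, pp.~92--95, plus one remark on non-triviality), and almost all of it is correct: the splitting lemma carried out purely in the positive $\{\land,\lor,\supset\}$-fragment, and the finite-stage arguments for disjointness, containment of $L$, closure under $L$-implications, and primeness all go through. The one genuinely flawed step is your non-triviality clause. The claim that ``$\Gamma \not\vdash_{L} \emptyset$ yields $\Gamma \not\vdash_{L} \{\phi\}$'' has the monotonicity of $\vdash_{L}$ backwards: enlarging the right-hand side can only create derivabilities, never destroy them, so independence of $(\Gamma,\emptyset)$ does \emph{not} guarantee independence of $(\Gamma,\{\phi\})$ --- take $\Gamma = \{p\}$ and $\phi = p$. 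The repair is simpler than your workaround and is exactly the paper's own remark: since $\bigvee\emptyset := \bot$, independence of the limit pair $(\Sigma,\bar{\Delta})$ already rules out any finite $\Sigma' \subseteq \Sigma$ with $\bigwedge\Sigma' \supset \bot \in L$; in particular $\bot \notin \Sigma$ (because $\bot \supset \bot \in L$), so $\Sigma$ is non-trivial with no case split on whether $\Delta$ is empty. Equivalently, in your own construction $\bot$ can never be placed on the left (doing so yields a dependence with the empty set of disjuncts), so $\bar{\Delta} \neq \emptyset$ comes for free and the replacement of $(\Gamma,\emptyset)$ by $(\Gamma,\{\phi\})$ is unnecessary.
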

\begin{proof}
Essentially \cite[92--95]{Restall2000}. We note that $\bigvee \emptyset := \bot$, so $\Sigma$ cannot contain any $\chi$ such that $\chi \supset \bot$ is in $L$; hence $\Sigma$ has to be non-trivial.
\end{proof}

\noindent We note that in order for the Pair Extnesion Theorem to hold it is crucial to  define $L$-derivability in a ``finitary'' way; see \cite{Bilkova2018}.

\medskip

\noindent\textbf{Theorem \ref{thm: GrFDE complete decidable}.} \textit{$GrFDE$ is a sound and weakly complete axiomatization of the $\mathcal{L}_{Gr}$-theory of all group frames. The theory is decidable.}
\begin{proof}
Soundness is left to the reader as an exercise. Completeness is established using a variant of the standard finite canonical model construction (see e.g.\ \cite[Ch.\ 3.1]{Fagin1995}). The argument used here is based on \cite{Nishimura1982}.

Assume that $\phi_0$ is not provable in $GrFDE$. Let the closure of $\phi_0$, $Cl(\phi_0)$, be the smallest set of formulas that is closed under subformulas such that 1) it contains $\phi_0$; 2) it contains $\top$; 3) if $K^{*}_G \psi \in Cl(\phi_0)$, then $K_G (\psi \land K^{*}_G \psi) \in Cl(\phi_0)$; and 4) if $K_G \psi \in Cl(\phi_0)$, then $K_a \psi \in Cl(\phi_0)$ for all $a \in G$. Formula $\psi$ is a \emph{negated formula} iff $\psi$ is of the form $\Neg\chi$ for some formula $\chi$. We define $\tilde{\psi} := \Neg \psi$ in case $\psi$ is not a negated formula and $\tilde{\Neg\chi} := \chi$. Let $Cl^{\Neg}(\phi_0) = Cl(\phi_0) \cup \{ \tilde{\psi} \mid \psi \in Cl(\phi_0) \}$. It can be shown easily that $Cl^{\Neg}(\phi_0)$ is finite. We denote $Cl(\phi_0)$ as $\Phi$ and $Cl^{\Neg}(\phi_0)$ as $\Phi'$ in the rest of the proof.

We define a finite canonical model as follows. The set of situations $S$ is the set of all independent $GrFDE$-pairs $x = \langle x_{in}, x_{out}\rangle$ such that $x_{in} \cup x_{out} = \Phi'$. It can be shown that each independent $GrFDE$-pair $\langle \Gamma, \Delta\rangle$ can be extended to an independent $GrFDE$-pair $\langle \Gamma', \Delta' \rangle$ such that $\Phi' \subseteq (\Gamma' \cup \Delta')$. Note that, for all $x$, $x_{in}$ contains always at least $\top$. Otherwise $\top \in x_{out}$ and $x_{in} = \emptyset$, but then $\bigwedge x_{in} \supset \bigvee x_{out}$ is provable and so $x$ is not an independent pair. 

The rest of the model is defined as follows. The Routley star is defined by $x^{\star} := \langle x^{\star}_{in} = \{ \psi \in \Phi' \mid \tilde{\psi} \in x_{out} \}, (\Phi' \setminus x^{\star}_{in} ) \rangle$. It is easily seen that $x$ is an independent $GrFDE$-pair and thus an element of $S$ in the finite canonical model.
%If it is not an independent pair, then \[ \vdash \bigwedge \psi_i \supset \bigvee \chi_j \] and so \[ \vdash \bigwedge \Neg \chi_j \supset \bigvee \Neg \psi_i  \] by Contraposition and De Morgan. Hence, \[ \vdash \bigwedge \tilde{\chi}_j \supset \bigvee \tilde{\psi}_i.\] But this means that $X$ is not an independent pair.
 Let us show that the canonical Routley star is of period two. It is clear that $\tilde{\tilde{\psi}} = \psi$ for all $\psi$. Therefore, $x^{\star\star}_{in} = \{ \psi \in \Phi' \mid \tilde{\psi} \in x^{\star}_{out} \}$ $= \{ \psi \in \Phi' \mid  \tilde{\tilde{\psi}} \notin x^{\star}_{out} \}$ $= x_{in}$.
 
Next, we define $R_axy$ iff $\{ \psi \mid K_a \psi \in x_{in} \} \subseteq y_{in}$. $R_a$ is reflexive thanks to (A6). The group relations $R_G$ and $R^{*}_G$ are defined as in ordinary group models. The canonical valuation is $v: p \mapsto \{ x \mid p \in x_{in}\}$ for $p \in \Phi$ and $v: p \mapsto \emptyset$ otherwise.

It remains to be shown that, for all $\psi \in \Phi$, $\psi \in x_{in}$ iff $x \vDash_v \psi$ (the Truth Lemma). For propositional variables, this holds by definition. It is easily seen that $\phi \land \psi \in x_{in}$ iff both $\phi,\psi \in x_{in}$ and $\phi \lor \psi \in x_{in}$ iff at least one of $\phi,\psi$ is in $x_{in}$, from which the claims for conjunctions and disjunctions follow. The claim for $\supset$ is similarly easy (it follows from from the fact that $FDE$ proves all positive classical tautologies---including $\top \supset (\phi \lor (\phi \supset \psi))$---and the fact that $\top \in x_{in}$ for all $x \in S$). 

The claims for the modal operators are established as follows. If $K_a \phi \in \Phi$, then $K_a \phi \in x_{in}$ implies $x \vDash_v K_a \phi$ by definition of $R_a$. To establish the converse implication, it is sufficient to observe that, if $K_a \phi \notin x_{in}$, then $\langle \{ \psi \mid K_a \psi \in x_{in} \}, \{ \phi \} \rangle$ is an independent pair. Hence, it can be extended to a pair $\langle \Gamma, \Delta\rangle$ such that $\Phi' \subseteq (\Gamma \cup \Delta)$. Take $y_{in} = \Gamma \cap \Phi'$ and $y_{out} = \Delta \cap \Phi'$. It is clear that $y = \langle y_{in}, y_{out}\rangle$ is an element of the canonical model such that $R_axy$ and that $y \not\vDash_v \phi$ (by the induction hypothesis).

The case of $K_G \phi \in \Phi$ where $G$ is not a singleton follows from (A7), the definition of $R_G$ and the induction hypothesis (for $K_a \phi$, $a \in G$; note that we may use the hypothesis as $K_G \phi \in \Phi$ implies $K_a \phi \in \Phi$ for all $a \in G$).

Finally, take $K^{*}_G \phi \in \Phi$. If $K^{*}_G \phi \in x_{in}$, then $x \vDash_v K^{*}_G \phi$ by the fact that $K_G (\phi \land K^{*}_G \phi) \in \Phi$ and the induction hypothesis for $K_G$. The converse entailment is established as follows. For each non-empty $Z \subseteq S$ and $y \in S$ of the finite canonical model, define
\[
\phi_y := \bigwedge y_{in} \qquad
\phi_Z := \bigvee_{y \in Z} \phi_y
\]  
We sometimes write $y$ instead of $\phi_y$ and $Z$ instead of $\phi_Z$. Take $Z := \{ y \mid R^{*}_G xy \}$ and assume that $\phi \in y_{in}$ for all $y \in Z$. We have to prove that $K^{*}_{G}\phi \in x_{in}$.
\begin{lemma}\label{lemma: common-1}
$GrFDE$ proves $Z \supset K_G (\phi \land Z)$.
\end{lemma}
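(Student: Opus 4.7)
The plan is to prove $\vdash \phi_Z \supset K_G(\phi \land \phi_Z)$ by reducing the goal to individual-agent $K_a$-statements and then exploiting the canonical definition of $R_a$. The first step is to invoke (A7), which gives $K_G \chi \supset\!\subset \bigwedge_{a \in G} K_a \chi$, combined with the fact that $\phi_Z = \bigvee_{y \in Z} \phi_y$; classical disjunctive reasoning via (A0) then reduces the goal to proving, for each fixed $y \in Z$ and $a \in G$, that $\vdash \phi_y \supset K_a(\phi \land \phi_Z)$.

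Fixing such $y$ and $a$, I would set $W := \{z \in S \mid R_a y z\}$. Since $y \in Z$ and $R_a \subseteq R^{*}_G$, we have $W \subseteq Z$, so by the standing hypothesis each disjunct $\phi_z$ of $\phi_W$ contains $\phi$ among its conjuncts and is itself a disjunct of $\phi_Z$; hence $\vdash \phi_W \supset (\phi \land \phi_Z)$ classically, and by (derivable) monotonicity of $K_a$, $\vdash K_a \phi_W \supset K_a(\phi \land \phi_Z)$. So it will suffice to prove $\vdash \phi_y \supset K_a \phi_W$. Letting $\Psi := \{\psi \in \Phi' \mid K_a \psi \in y_{in}\}$, each $K_a \psi$ (for $\psi \in \Psi$) is a conjunct of $\phi_y$, so $\vdash \phi_y \supset K_a \psi$, and iterating (A5) yields $\vdash \phi_y \supset K_a \bigwedge \Psi$. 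Since by the canonical definition of $R_a$ we have $W = \{z \in S \mid \Psi \subseteq z_{in}\}$, one further appeal to monotonicity reduces the whole task to the residual $\{\land,\lor,\supset\}$-claim $\vdash \bigwedge \Psi \supset \phi_W$.

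This last step will be the main obstacle; my approach would be a direct application of Pair Extension. Were $\vdash \bigwedge \Psi \supset \phi_W$ to fail, $\langle \Psi, \{\phi_z \mid z \in W\}\rangle$ would be an independent $GrFDE$-pair, and the variant of Pair Extension used to populate $S$ would yield an independent pair $\langle \Gamma, \Delta\rangle$ extending it with $\Phi' \subseteq \Gamma \cup \Delta$. Setting $\hat{z} := \langle \Gamma \cap \Phi',\, \Delta \cap \Phi'\rangle$ gives an element of $S$ with $\Psi \subseteq \hat{z}_{in}$, so $\hat{z} \in W$ and thus $\phi_{\hat{z}} \in \Delta$. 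But $\phi_{\hat{z}} = \bigwedge \hat{z}_{in}$ is trivially derivable from $\hat{z}_{in} \subseteq \Gamma$, contradicting the independence of $\langle \Gamma, \Delta\rangle$. This would close the proof of the lemma.
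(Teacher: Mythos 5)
Your proof is correct and follows essentially the same route as the paper's: reduction to individual agents via (A7), iterated (A5) plus derivable monotonicity to obtain $\phi_y \supset K_a(\phi \land \phi_Z)$, and a pair-extension reductio that manufactures a canonical $R_a$-successor whose describing formula lands on the wrong side of an independent pair. The only difference is organizational: the paper factors the reductio through the auxiliary Lemma \ref{lemma: common-2} (if $X_GY$ then $\vdash X \supset K_G Y$, applied with $X = Y = Z$), whereas you inline the same argument with $W = R_a(y) \subseteq Z$.
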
 
\noindent Before proving the lemma, we show how it is applied. Using the Induction Rule (R3) and the fact that $x \in Z$, we obtain $\vdash x \supset K^{*}_G \phi$. Hence, $K^{*}_G \phi$ must be in $x_{in}$, otherwise $x$ would not be an independent pair.

\noindent\textit{Proof of Lemma \ref{lemma: common-1}.} We write $X_GY$ if, for all $x \in X$, if $R_G xy$, then $y \in Y$. We prove the following claim.

\begin{lemma}\label{lemma: common-2}
If $X_GY$, then $GrFDE$ proves $X \supset K_G Y$.
\end{lemma}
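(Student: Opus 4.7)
The plan is to reduce to showing that, for each $x \in X$ and each $a \in G$, $GrFDE$ proves $\phi_x \supset K_a \phi_Y$. Axiom (A7) then lifts these per-agent claims to $\vdash \phi_x \supset K_G \phi_Y$ for each $x \in X$, and classical disjunction elimination in the antecedent over the finitely many $x \in X$ yields $\vdash \phi_X \supset K_G \phi_Y$, which is the statement of Lemma~\ref{lemma: common-2}.

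Fix $x \in X$ and $a \in G$, and let $\Psi_x^a := \{ \psi \mid K_a \psi \in x_{in} \}$. Since $K_a \psi \in \Phi$ forces $\psi \in \Phi$ by subformula closure, $\Psi_x^a$ is a finite subset of $\Phi$. The canonical definition of $R_a$ makes the $R_a$-successors of $x$ exactly those $y \in S$ with $\Psi_x^a \subseteq y_{in}$, and the hypothesis $X_G Y$ guarantees that every such $y$ lies in $Y$. The central step, and the one I expect to be the main obstacle, is the claim that $GrFDE$ proves $\bigwedge \Psi_x^a \supset \phi_Y$. I would argue by contradiction using the Pair Extension Theorem: if this derivation failed, then $\langle \Psi_x^a,\, \{ \bigwedge y_{in} \mid y \in Y \}\rangle$ would be an independent $GrFDE$-pair, hence extendible to a non-trivial prime $GrFDE$-theory $\Sigma$ containing $\Psi_x^a$ and disjoint from $\{ \bigwedge y_{in} \mid y \in Y \}$. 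Setting $y^{\#} := \langle \Sigma \cap \Phi',\, \Phi' \setminus \Sigma \rangle$ would yield a legitimate element of the finite canonical model (the pair is independent because $\Sigma$ is prime and closed under MP, so it cannot derive any formula outside itself, and it covers $\Phi'$ by construction); then $\Psi_x^a \subseteq \Sigma \cap \Phi' = y^{\#}_{in}$ would give $R_a x y^{\#}$ and hence $y^{\#} \in Y$. But then $\bigwedge y^{\#}_{in}$ would lie both in $\Sigma$ (by closure of $\Sigma$ under finite conjunction, since $y^{\#}_{in} \subseteq \Sigma$) and outside $\Sigma$ (by the disjointness property), contradicting the choice of $\Sigma$.

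Given $\vdash \bigwedge \Psi_x^a \supset \phi_Y$, the remainder is routine modal bookkeeping. Standard $K_a$-monotonicity yields $\vdash K_a \bigwedge \Psi_x^a \supset K_a \phi_Y$; iterated applications of (A5) give $\vdash \bigwedge_{\psi \in \Psi_x^a} K_a \psi \supset K_a \bigwedge \Psi_x^a$; and since, by definition of $\Psi_x^a$, the formula $\bigwedge_{\psi \in \Psi_x^a} K_a \psi$ is a subconjunction of $\phi_x = \bigwedge x_{in}$, classical propositional reasoning yields $\vdash \phi_x \supset \bigwedge_{\psi \in \Psi_x^a} K_a \psi$. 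Chaining these three implications delivers $\vdash \phi_x \supset K_a \phi_Y$, and the reduction in the first paragraph then concludes the proof of Lemma~\ref{lemma: common-2}.
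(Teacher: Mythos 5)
Your proof is correct and follows essentially the same route as the paper's: reduce to the per-agent claim $\phi_x \supset K_a \phi_Y$ and lift via (A7), apply the Pair Extension Theorem to the pair $\langle\{\psi \mid K_a\psi \in x_{in}\}, \{\phi_y \mid y \in Y\}\rangle$, and use $X_GY$ to place the resulting canonical element in $Y$, yielding a contradiction. The only difference is presentational: you make explicit the modal bookkeeping (monotonicity, iterated (A5), the subconjunction step) that the paper compresses into the assertion that the pair above must be independent.
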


\noindent\textit{Proof of Lemma \ref{lemma: common-2}.} We prove that if $X_GY$, then $GrFDE$ proves $X \supset K_a Y$ for all $a \in G$; the desired result then follows by applying axiom (A7). The proof is by reductio ad absurdum. Assume that $GrFDE$ does not prove $X \supset K_a Y$. Then there is $w \in X$ such that $GrFDE$ does not prove $w \supset K_a Y$. This means that $\langle \{ \psi \mid K_a \psi \in w_{in} \}, \{ \phi_z \mid z \in Y\} \rangle$ is an independent pair extendible to $\langle \Gamma, \Delta\rangle$ such that $\Gamma \cup \Delta$ contains $\Phi'$. Take $y = \langle \Gamma \cap \Phi', \Delta \cap \Phi' \rangle$. It is clear that $R_a wy$ and so, by our assumption, $y \in Y$. However, this means that $GrFDE$ proves $\phi_y \supset \phi_Y$ and so $y$ cannot be an independent pair. This is a contradiction. Hence, Lemma \ref{lemma: common-2} is established.

We continue the proof of Lemma \ref{lemma: common-1}. Note that $Z_GZ$, so $GrFDE$ proves $Z \supset K_G Z$ by Lemma \ref{lemma: common-2}. Moreover, our assumption that $\phi \in y_{in}$ for all $y \in Z$ implies that $GrFDE$ proves $Z \supset \phi$. Using monotonicity and regularity of $K_G$, we infer that $GrFDE$ proves $K_G Z \supset K_G( \phi \land Z)$. Hence, $GrFDE$ proves $Z \supset K_G (\phi \land Z)$ as desired. This concludes the proof of Lemma \ref{lemma: common-1} and completeness is established.

Our proof shows that the $\mathcal{L}_{Gr}$-theory of group frames is recursively axiomatizable and the axiomatization is complete with respect to a recursively enumerable set of models (models based in finite group frames). Hence, the theory is decidable.
\end{proof}

\medskip

\noindent\textbf{Theorem \ref{thm: GrUpFDE complete decidable}.} \textit{$GrUpFDE$ is a sound and weakly complete axiomatization of the $\mathcal{L}_{GrUp}$-theory of all group update frames. The theory is decidable.}
\begin{proof}
Assume that $\phi_0$ is not provable in $GrUpFDE$. Define the finite canonical model based on the closure of $\phi_0$, $\Phi$, and the $\Neg$-closure of $\Phi$, which we denote $\Phi'$, as in the proof of Theorem \ref{thm: GrFDE complete decidable}. Moreover, let $RxYz$ iff there are prime $GrUpFDE$-theories $\Gamma, \Sigma$ and $\Delta_i$ for $i \in I$ such that
\begin{itemize}
\item[(a)] for all $\phi\LD\psi$, if $\phi\LD\psi \in \Gamma$ and $\phi \in \bigcap_{i \in I} \Delta_i$, then $\psi \in \Sigma$; and
\item[(b)] $x_{in} \subseteq \Gamma$, $(\Sigma \cap \Phi') \subseteq z_{in}$ and, for all $\Delta_i$ there is $y_j \in Y$ such that $(y_j)_{in} \subseteq \Delta_i$.
\end{itemize} 
(A similar definition appears in \cite{Bilkova2016}.) We have to show only that the Truth Lemma holds for $\phi\LD\psi \in \Phi$. If $\phi\LD\psi \in x_{in}$, $RxYz$ and $\phi \in \bigcup \{y_{in} \mid y \in Y \}$, then $\psi \in z_{in}$ by the definition of the canonical $R$. Conversely, we reason similarly as in \cite[256]{Restall2000}.  First, assume that $\phi\LD\psi \in x_{out}$. Extend $x$ to a prime theory $\Gamma$. Second, extend the independent pair $\langle \{ \chi \mid \phi\LD\chi \in \Gamma \}, \{ \psi \}\rangle$ to a prime theory $\Sigma$. (The proof that it is an independent pair uses (A9) and (R4); the case $\{ \chi \mid \phi\LD\chi \in \Gamma \} = \emptyset$ uses (R5).) Third, take the set $\Lambda = \{ \alpha \mid \exists \beta (\beta \notin \Sigma \And \alpha \LD \beta \in \Gamma \}$. For each $\alpha_i \in \Lambda$, $\phi \supset \alpha_i$ is not provable. (If some $\phi \supset \alpha_i$ were provable, then $\beta_i \in \Sigma$ by (R4).) Hence, extend each pair $\langle \{ \phi \}, \{ \alpha_i \}\rangle$ to a prime theory $\Delta_i$. It follows from the construction of $\Sigma$ and $\Delta_i$ that (a) holds for $\Gamma, \{ \Delta_i \}_{i \in I}, \Sigma$. (If $\Lambda = \emptyset$, then $\{ \Delta_i \}_{i \in I} = \emptyset$ and so each $\alpha \in \bigcap_{i \in I} \Delta_i$; but in this case also $\alpha \LD \beta \in \Gamma$ implies $\beta \in \Sigma$.) Moreover, $\phi \in \bigcap_{i \in I} \Delta_{i}$ by the construction of $\Delta_i$. Finally, take $(y_i)_{in} = \Delta_{i} \cap \Phi'$ and $(y_i)_{out}$ its complement relative to $\Phi'$ and similarly for $z$ and $\Sigma$. It is clear that $\phi \in (y_i)_{in}$ for all $i \in I$ and $\psi \notin z_{in}$. This concludes the proof of the Truth Lemma.

Our proof establishes that the $\mathcal{L}_{GrUp}$-theory of group update frames is recursively axiomatizable and the axiomatization is complete with respect to a recursively enumerable set of models (models based on finite group update frames). Hence, the theory is decidable.
\end{proof}

%\section*{Acknowledgements}

% \bibliographystyle{splncs04}
%\bibliography{GrUpFDE-LORI-final}
% \bibliography{bibliography}

\end{document}